\providecommand{\U}[1]{\protect\rule{.1in}{.1in}}
\providecommand{\U}[1]{\protect\rule{.1in}{.1in}}
\newtheorem{theorem}{Theorem}
\newtheorem{corollary}[theorem]{Corollary}
\newtheorem{proposition}[theorem]{Proposition}
\newenvironment{proof}[1][Proof]{\noindent\textbf{#1.} }{\ \rule{0.5em}{0.5em}}
\begin{document}

\title{Flat coordinates of flat St\"{a}ckel systems}
\author{Krzysztof Marciniak\\Department of Science and Technology \\Campus Norrk\"{o}ping, Link\"{o}ping University\\601-74 Norrk\"{o}ping, Sweden\\krzma@itn.liu.se
\and Maciej B\l aszak\\Faculty of Physics, Division of Mathematical Physics, A. Mickiewicz University\\Umultowska 85, 61-614 Pozna\'{n}, Poland\\blaszakm@amu.edu.pl}
\maketitle

\begin{abstract}
In this article we explicitely construct transformation bewteen separable and
flat coordinates for flat St\"{a}ckel systems and exploit the structre of
these systems in flat coordinates. In the elliptic case these coordinates
become well known generalized elliptical coordinates of Jacobi.

\end{abstract}

Keywords and phrases: Hamiltonian systems, completely integrable systems,
St\"{a}ckel systems, Hamilton-Jacobi theory, separable potentials

\section{Introduction}

The search for flat coordinates for systems that we a priori know are flat is
not easy. This article is devoted to search for flat coordinates for the so
called St\"{a}ckel systems \cite{Stackel}. St\"{a}ckel systems are roughly
speaking (for more precise definition, see below) Hamiltonian systems
separable in the sense of Hamilton-Jacobi theory by a pointwise transformation
to orthogonal coordinates. As such, they are of great importance in theory of
classical integrable systems.

In this paper we construct separable flat systems of St\"{a}ckel type directly
from scratch i.e. from an appropriate separation curve (or an appropriate set
of separation relations \cite{Sklyanin}) and then find flat coordinates for
(almost) all flat St\"{a}ckel systems of Benenti type. We also establish the
signature of metric tensors of these systems. Further, we present the explicit
form of many important geometric objects connected to these flat St\"{a}ckel
systems (namely metric tensors, Killing tensors and separable potentials) in
these new coordinates. Thus, we end up with separable flat Hamiltonians
written in flat coordinates of respective pseudo-Euclidian metrices.

Our construction encompasses two known cases: Jacobi elliptic coordinates
(introduced in \cite{Jacobi1} and fully described in \cite{Jacobi2}) and
Jacobi parabolic coordinates and also one of the less known cases considered
recently by Blaszak and Sergyeyev in \ \cite{blaser} (but with no degeneration
of coordinate systems).

The paper is organized as follows. In Section 2 we remind basic facts about
St\"{a}ckel systems and in particular about St\"{a}ckel systems of Benenti
type. In our approach we constructing St\"{a}ckel systems directly in their
separation coordinates\ using an appropriate separation relations (separation
curve). In Section 3 we present the construction of flat coordinates in the
case of real roots in the polynomial that defines a given Benenti system.
Section 4 is devoted to expressing various tensor objects in our coordinates
and given known and new formulas for a variety of separable potentials.
Finally, in Section 5 we consider the case of complex conjugate (but still
nondegenerate) roots. The case of degenerated roots is non studied in this paper.

\section{St\"{a}ckel systems}

Consider a set of Darboux coordinates (often called canonical coordinates)
$(\lambda,\mu)=(\lambda_{1}\ldots,,\lambda_{n},\mu_{1},\ldots,\mu_{1}) $ on a
$2n$-dimensional Poisson manifold $M$ equipped with a Poisson operator $\Pi$
(so that $\Pi=%
{\textstyle\sum\nolimits_{i=1}^{n}}
\frac{\partial}{\partial\lambda_{i}}\wedge\frac{\partial}{\partial\mu_{i}}$).
A classical St\"{a}ckel system on $M$ is a system of $n$ Hamiltonians (i.e.
smooth real-valued functions) $H_{i}$ defined on a dense open subset of $M$
originating from a set of $n$ separation relations \cite{Sklyanin} of the
form:
\begin{equation}
\sigma(\lambda_{i})+%
{\displaystyle\sum\limits_{j=1}^{n}}
H_{j}\lambda_{i}^{\gamma_{j}}=\frac{1}{2}f(\lambda_{i})\mu_{i}^{2}\text{,
\ \ \ }i=1,\ldots,n,\label{Stack}%
\end{equation}
where $f$ and $\sigma$ are arbitrary functions of one argument and where all
$\gamma_{i}\in\mathbf{Z,}$ $i=1,\ldots,n,$ and are such that no two
$\gamma_{i}$ coincide. Thus, a particular St\"{a}ckel system is defined by the
choice of integers $\gamma_{1},\ldots,\gamma_{n}$ and by the choice of
functions $f$ and $\sigma$. Customary one can also treat this system of
relations as $n$ points on ($n$ copies of) the following \emph{separation
curve}%
\begin{equation}
\sigma(\lambda)+%
{\displaystyle\sum\limits_{j=1}^{n}}
H_{j}\lambda^{\gamma_{j}}=\frac{1}{2}f(\lambda)\mu^{2},\label{Stackc}%
\end{equation}
in $\lambda\mu$ plane which helps us to avoid writing too many indices. The
relations (\ref{Stack}) (or $n$ copies of (\ref{Stackc})) constitute a system
of $n$ equations linear in the unknowns $H_{i}$. Solving these relations with
respect to $H_{i}$ we obtain $n$ functions $H_{i}=H_{i}(\lambda,\mu)$ on $M$
commuting (since the right-hand sides of formulas (\ref{Stack}) commute) with
respect to the Poisson operator $\Pi$:%
\[
\left\{  H_{i},H_{j}\right\}  _{\Pi}\equiv\Pi(dH_{i},dH_{j})=0\text{ for all
}i,j=1,\ldots,n
\]
These functions have the form
\begin{equation}
H_{i}=\frac{1}{2}\mu^{T}K_{i}G\mu+V_{i}(\lambda)\text{ \ }i=1,\ldots
,n\text{,}\label{Stackham}%
\end{equation}
where we denote $\lambda=(\lambda_{1},\ldots,\lambda_{n})^{T}$ and $\mu
=(\mu_{1},\ldots,\mu_{n})^{T}$. The functions $H_{i}$ can be interpreted as
$n$ quadratic in momenta $\mu$ Hamiltonians on the phase space $M=T^{\ast
}\mathcal{Q}$ cotangent to a Riemannian manifold $\mathcal{Q}$ (so that
$\lambda_{1,}\ldots,\lambda_{n}$ are coordinates on $\mathcal{Q}$) equipped
with the contravariant metric tensor $G$ depending on the function $f$ \ and
the choice of the constants $\gamma_{i}$. They are commonly known as
\emph{St\"{a}ckel Hamiltonians} on $M$. Note also that by the very
construction of $H_{i}$ the variables $\left(  \lambda,\mu\right)  $ are
separation variables for all the Hamiltonians in (\ref{Stackham}) in the sense
that the Hamilton-Jacobi equations associated with the Hamiltonians $H_{i}$
admit a common additively separable solution $W=%
{\textstyle\sum_{i=1}^{n}}
W_{i}(\lambda_{i},a)$. Further, the objects $K_{i}$ in (\ref{Stackham}) can be
interpreted as $(1,1)$-type Killing tensors on $\mathcal{Q}$ for the metric
$G$. The metric tensor $G$ and all the Killing tensors $K_{i}$ \ in
(\ref{Stackham}) are diagonal in $\lambda$-variables and it is easy to see
that the Killing tensors $K_{i}$ do not depend neither on a particular choice
of $f$ nor $\sigma$: changing $\sigma$ we change the potentials $V_{i}%
(\lambda)$ while changes of $f$ influence the metric $G$. We define also a
$(2,0)$-type tensors $A_{i}$ (contravariant Killing tensors) by
\begin{equation}
A_{i}=K_{i}G\text{, \ \ }i=1\ldots n\label{A}%
\end{equation}
so that since $K_{1}=I$ we have $A_{1}=G$.

A particular subclass of St\"{a}ckel systems is given by choosing the
separation curve (\ref{Stackc}) in the form%
\begin{equation}%
{\displaystyle\sum\limits_{j=1}^{n}}
H_{j}\lambda^{n-j}=B_{m}(\lambda)\left(  \frac{1}{2}\mu^{2}+\lambda
^{k}\right)  \text{, \ \ \ }m\in\mathbf{N}\text{, }k\in\mathbf{Z}\label{BenSC}%
\end{equation}
(so that $f(\lambda)=B_{m}(\lambda)$ while $\sigma(\lambda)=-\lambda^{k}%
B_{m}(\lambda)$ with an arbitrary fixed integer $k$) where%
\[
B_{m}(\lambda)=\sum_{j=0}^{m}\lambda^{m-j}\rho_{j}^{(m)}(\beta)\equiv
\prod\limits_{j=1}^{m}\left(  \lambda-\beta_{j}\right)
\]
is a real polynomial of order $m$ in $\lambda$ with possibly complex roots
$\beta_{j}$ (so that $\beta_{j}$ are either real or exist in complex conjugate
pairs) that are all assumed to be different (we assume trhoughout the article
that there is no degeneracy in the roots of the polynomial $B_{m} $). The real
coefficients $\rho_{j}^{(m)}(\beta)$ are thus Vi\`{e}te polynomials (signed
symmetric polynomials) of the possibly complex constants $\beta_{1}%
,\ldots,\beta_{m}$:
\begin{equation}
\rho_{j}^{(m)}(\beta)=(-1)^{j}%
{\displaystyle\sum\limits_{1\leq s_{1}<s_{2}<\ldots<s_{j}\leq m}}
\beta_{s_{1}}\ldots\beta_{s_{j}}\text{, \ \ }j=1,\ldots,m\label{defrho}%
\end{equation}
and in case of no ambiguity (when $m$ is obvious) we will simply denote them
as $\rho_{j}$. The Hamiltonians $H_{i}$ generated by the separation curve
(\ref{BenSC}) constitute a completely integrable system that is called a
St\"{a}ckel system of \emph{Benenti type} (or simply a \emph{Benenti system})
due to S. Benenti's contribution to the study of these objects \cite{Ben1}%
,\cite{Ben2}. The Hamiltonians $H_{i}$ have the form (\ref{Stackham}) with the
metric tensor $G$ and the Killing tensors $K_{i}$ given explicitely through%
\begin{equation}
G=\operatorname*{diag}\left(  \frac{f(\lambda_{1})}{\Delta_{1}},\ldots
,\frac{f(\lambda_{n})}{\Delta_{n}}\right)  =\operatorname*{diag}\left(
\frac{B_{m}(\lambda_{1})}{\Delta_{1}},\ldots,\frac{B_{m}(\lambda_{n})}%
{\Delta_{n}}\right)  ,\text{ \ \ }\Delta_{i}=%
{\textstyle\prod\limits_{j\neq i}}
(\lambda_{i}-\lambda_{j})\label{GBen}%
\end{equation}%
\begin{equation}
K_{i}=-\operatorname*{diag}\left(  \frac{\partial q_{i}}{\partial\lambda_{1}%
},\cdots,\frac{\partial q_{i}}{\partial\lambda_{n}}\right)  \text{
\ \ \ }i=1,\ldots,n\label{Ki}%
\end{equation}
Here and below $q_{i}=q_{i}(\lambda)$ are Vi\`{e}te polynomials in the
variables $\lambda_{1},\ldots,\lambda_{n}$:%
\begin{equation}
q_{i}(\lambda)=(-1)^{i}%
{\displaystyle\sum\limits_{1\leq s_{1}<s_{2}<\ldots<s_{i}\leq n}}
\lambda_{s_{1}}\ldots\lambda_{s_{i}}\text{, \ \ }i=1,\ldots,n\label{defq}%
\end{equation}
(cf (\ref{defrho})) that can also be considered as new coordinates on\ the
Riemannian manifold $\mathcal{Q}$ (we will then refer to them as Vi\`{e}te coordinates).

\begin{proposition}
The metric (\ref{GBen}) is flat only for $m\leq n$ and is of constant
curvature for $m=n+1$. For higher $m$ it has a non-constant curvature.
\end{proposition}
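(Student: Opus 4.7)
The plan is to compute the Riemann tensor of the metric $G$ directly in the diagonal $\lambda$-coordinates and to read off its order of vanishing from the degree $m$. Since the covariant metric obtained by inverting (\ref{GBen}) is still diagonal, with $g_{ii}=\Delta_i/B_m(\lambda_i)$, the Riemann tensor is fully determined by the sectional curvatures $R_{ijij}$ of the coordinate $2$-planes, so it is enough to examine how these depend on $m$.

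First I would compute the Christoffel symbols using the standard formulas for an orthogonal metric. Derivatives $\partial_{\lambda_k}g_{ii}$ for $k\neq i$ produce factors $(\lambda_i-\lambda_k)^{-1}$ from differentiating $\Delta_i$, while $\partial_{\lambda_i}g_{ii}$ involves $B_m'(\lambda_i)$. Substituting into the curvature formula $R_{ijij}=\partial_i\Gamma^{i}_{jj}-\partial_j\Gamma^{i}_{ij}+\sum_k(\Gamma^{i}_{ik}\Gamma^{k}_{jj}-\Gamma^{i}_{jk}\Gamma^{k}_{ij})$ and systematically applying partial-fraction decompositions, each sectional curvature should collapse to a symmetric rational expression whose degree in the $\lambda_k$ is controlled by the degree of $B_m$. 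The key algebraic tool is the classical Lagrange identity: for a polynomial $P$, the sum $\sum_{k=1}^{n}P(\lambda_k)/\Delta_k$ equals the complete homogeneous symmetric polynomial $h_{\deg P-n+1}(\lambda_1,\ldots,\lambda_n)$ times the leading coefficient of $P$, with the convention $h_d=0$ for $d<0$. Applied to the polynomial building blocks appearing in the reduction of $R_{ijij}$, all of which have degree at most $m-2$, this splits the analysis into three regimes: for $m\leq n$ all such contributions vanish identically and $R_{ijij}\equiv 0$, so $G$ is flat; for $m=n+1$ one obtains a common nonzero constant (reflecting that the leading coefficient of the monic $B_{n+1}$ is $1$) and $G$ has constant sectional curvature; for $m\geq n+2$ nontrivial symmetric functions of the $\lambda_k$ appear and the curvature is position-dependent.

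The main obstacle I anticipate is the sheer size of the intermediate expressions: the products $\Gamma^{i}_{ik}\Gamma^{k}_{jj}$ and the derivatives of $\Gamma^{i}_{jj}$ produce many double sums of rational functions in $\lambda_1,\ldots,\lambda_n$ whose collection into a single residue form requires repeated use of identities such as $[(\lambda_k-\lambda_i)(\lambda_k-\lambda_j)]^{-1}=(\lambda_i-\lambda_j)^{-1}[(\lambda_k-\lambda_i)^{-1}-(\lambda_k-\lambda_j)^{-1}]$ and the symmetry of the final answer under $i\leftrightarrow j$. A technically cleaner alternative, which I would actually prefer to execute, is to transform to the Vi\`{e}te coordinates $q_i$ of (\ref{defq}), in which the components of $G$ are polynomial in $q$ with coefficients depending only on $\rho^{(m)}_j(\beta)$, and the Riemann tensor becomes a polynomial whose total degree is governed by $m-n$; the three cases of the proposition then correspond respectively to this curvature polynomial being identically zero, a nonzero constant, and genuinely polynomial of positive degree.
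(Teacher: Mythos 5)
Your overall strategy---a direct computation of the curvature of (\ref{GBen}) in the $\lambda$-coordinates, organized by the degree of $B_m$ relative to $n$ via the identity $\sum_k \lambda_k^{s}/\Delta_k=h_{s-n+1}(\lambda)$---is exactly what the paper does: its entire proof is the sentence that the claim follows ``by direct calculation of scalar curvature of (\ref{GBen})'', so your sketch is in fact more detailed than the original, and the three-regime mechanism you extract from the Lagrange/Euler identity (everything of degree $\le n-2$ annihilated, degree $n-1$ giving the leading coefficient, higher degree giving genuine symmetric functions) is the correct one.

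There is, however, one step whose stated justification is wrong. You assert that because the metric is diagonal, ``the Riemann tensor is fully determined by the sectional curvatures $R_{ijij}$''. For a general orthogonal metric in dimension $\ge 3$ this is false: the components $R_{ijik}$ with three distinct indices need not vanish, and flatness requires both families of Lam\'e equations, not only $R_{ijij}=0$ (for the same reason, the paper's appeal to the \emph{scalar} curvature alone is also insufficient as stated). For the metric at hand the mixed components do vanish identically, but this must be checked rather than inferred from diagonality: writing $g_{ii}=H_i^2=\Delta_i/B_m(\lambda_i)$ one has $\partial_{\lambda_j}\ln H_i=-\tfrac12(\lambda_i-\lambda_j)^{-1}$ for $j\ne i$, independent of $B_m$, and a one-line computation shows that both sides of the second Lam\'e equation $\partial_j\partial_kH_i=(\partial_k\ln H_j)\,\partial_jH_i+(\partial_j\ln H_k)\,\partial_kH_i$ equal $H_i/\bigl(4(\lambda_i-\lambda_j)(\lambda_i-\lambda_k)\bigr)$ for any $f$; so $R_{ijik}=0$ always, and only the $R_{ijij}$ carry information about $m$. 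Two smaller points: your version of the Lagrange identity is literally correct only for $\deg P\le n-1$ (in general every coefficient $p_s$ contributes $p_s h_{s-n+1}$, not just the leading one), which does not affect the three regimes but should be stated accurately; and for $m\ge n+2$ you should confirm that the non-constant contributions do not cancel, i.e.\ that the curvature is genuinely non-constant, rather than merely observing that non-constant terms appear.
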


One proves this by direct calculation of scalar curvature of (\ref{GBen}). The
above proposition means that it is meaningful to seek for flat coordinates for
Beneti systems only in case when $m=0,\ldots,n$.

Let us now turn our attention to the separable potentials $V_{i}(\lambda)$ in
(\ref{Stackham}) in Benenti case. If we remove the $\lambda^{k}$ term from the
right hand side of (\ref{BenSC}) we receive a geodesic Benenti system (with
all potentials in (\ref{Stackham}) equal to zero). In the non-geodesic case
(that is the case generated by the full separation curve (\ref{BenSC})) the
potentials $V_{i}(\lambda)$ depend on the constants $m$ and $k$ (as well as on
the dimension $n$) so we will denote them by $V_{i}^{(m,k)}(\lambda)$ or
simply by $V_{i}^{(m,k)}$. Notice again that these potentials are generated by
the term $\sigma(\lambda)=-\lambda^{k}B_{m}(\lambda)$ in the separation curve
(\ref{BenSC}). Further by $V^{(m,k)}$ we will denote the column vector with
components $V_{i}^{(m,k)}$ so that%
\[
V^{(m,k)}=\left(  V_{1}^{(m,k)},\ldots,V_{n}^{(m,k)}\right)  ^{T}%
\]
By solving (\ref{BenSC}) with respect to $H_{i}$ one obtains that%
\begin{equation}
V^{(m,k)}=\sum_{j=0}^{m}\rho_{j}^{(m)}(\beta)U^{(m-j+k)}\label{V}%
\end{equation}
where the column vector $U^{(k)}$ represents the so called basic separable
potentials related to $\sigma(\lambda)=-\lambda^{k}$ which can be constructed
recursively \cite{R1,R2} by%
\begin{equation}
U^{(k)}=R^{k}U^{(0)}\label{U}%
\end{equation}
with the recursion matrix $R$ of the form%

\begin{equation}
R=\left(
\begin{array}
[c]{cccc}%
-q_{1} & 1 &  & \\
-q_{2} &  & \ddots & \\
\vdots &  &  & 1\\
-q_{n} & 0 & \cdots & 0
\end{array}
\right) \label{R}%
\end{equation}
and with $U^{(0)}=(0,0,\ldots,0,1)^{T}$. Note that the formulas (\ref{V}%
)-(\ref{R}) are non tensor in that they are the same in an arbitrary
coordinate system, not only in the separation variables $\lambda_{i}$. Note
also that for $m=0$ we have $V_{r}^{(0,k)}=U_{r}^{(k)}$ so that for $m=0$ both
families of potentials coincide. The potentials $V$ are naturally linear
combinations of the basic separable potentials $U$ determined by our specific
choice of the function $\sigma(\lambda)$ in (\ref{BenSC}). This choice is
motivated by the fact that the potentials $V$ in flat coordinates generalize
the well known potentials as it will be demonstrated below. The "lowest" basic
separable potentials have the following form: $U^{(1)}=RU^{(0)}=(0,0,\ldots
0,1,0)^{T}$ up to $U^{(n-1)}=R^{n-1}U^{(0)}=(1,0,\ldots,0)^{T}$ are trivial
(constant), $U^{(n)}=R^{n}U^{(0)}=(-q_{1},\ldots,-q_{n})$ is the first
nontrivial positive potential while $U^{(-1)}=R^{-1}U^{(0)}$ $=(1/q_{n}%
,q_{1}/q_{n},\ldots,q_{n-1}/q_{n})^{T}$. The "negative" potentials (i.e.
potentials obtained for negative $k$) are rational functions of $q$ that
quickly become complicated with decreasing $k$.

More information on Benenti systems can be found in
\cite{bensol,bensol2,bensol3}.

\section{Flat coordinates for St\"{a}ckel systems - real case}

As we mentioned before, if we restrict ourselves to the case $0\leq m\leq n$
then the metirc $G$ in (\ref{GBen}) is flat so there is a legitimate question
of finding flat coordinates for this metric. In this section we construct flat
coordinates of $G$ in case where all the roots $\beta_{j}$ of $B_{m}(\lambda)$
are real. So, our aim is to find flat coordinates for the metric tensor $G$
for an arbitrary $m$ between $0$ an $n$ and for arbitrary real constants
$\beta_{1},\ldots,\beta_{m}$.

Consider thus the following generating function%
\begin{equation}%
{\displaystyle\sum\limits_{j=0}^{n-m}}
z^{n-m-j}a_{j}-\frac{1}{4}\varepsilon\sum_{j=1}^{m}\frac{x_{j}^{2}}%
{z-\beta_{j}}\equiv\frac{%
{\displaystyle\prod\limits_{j=1}^{n}}
(z-\lambda_{j})}{%
{\displaystyle\prod\limits_{j=1}^{m}}
(z-\beta_{j})}\label{GF}%
\end{equation}
(where $\varepsilon=+1$ or $\varepsilon=-1$ and where the identity is taken
with respect to the variable $z$). This function defines (locally) an
invertible map between variables $\left(  \lambda_{1},\ldots,\lambda
_{n}\right)  $ and \ new variables $\left(  x_{1},\ldots,x_{m},a_{1}%
,\ldots,a_{n-m}\right)  $ on our manifold $\mathcal{Q}$ whereas the choice of
the sign of $\varepsilon$ is governed by the actual sign of the variables in a
given region of our Riemannian manifold $\mathcal{Q}$. An easy way to see this
is to multiply both sides of (\ref{GF}) by $B_{m}(z)\equiv%
{\textstyle\prod\limits_{j=1}^{m}}
(z-\beta_{j})$ and compare the coefficients of polynomials on both sides of
the equation. We can see that $a_{0}=1$ in the above formula, so\ in case
$m=n$ the generating function (\ref{GF}) attains the form%
\[
1-\frac{1}{4}\varepsilon\sum_{j=1}^{n}\frac{x_{j}^{2}}{z-\beta_{j}}\equiv
\frac{%
{\displaystyle\prod\limits_{j=1}^{n}}
(z-\lambda_{j})}{%
{\displaystyle\prod\limits_{j=1}^{n}}
(z-\beta_{j})}%
\]
which in the regions of the manifold $\mathcal{Q}$ when $\varepsilon<0$ is
nothing else as the well known transformation (see \cite{Jacobi1} and
\cite{Jacobi2}) between the coordinates $\left(  x_{1},\ldots,x_{n}\right)  $
and the Jacobi elliptic coordinates $\left(  \lambda_{1},\ldots,\lambda
_{n}\right)  $. In the case $m=n-1$ the function (\ref{GF}) becomes
\[
z+a_{1}-\frac{1}{4}\varepsilon\sum_{j=1}^{n-1}\frac{x_{j}^{2}}{z-\beta_{j}%
}\equiv\frac{%
{\displaystyle\prod\limits_{j=1}^{n}}
(z-\lambda_{j})}{%
{\displaystyle\prod\limits_{j=1}^{m}}
(z-\beta_{j})}%
\]
which is commonly known as the generating function for transformation between
the coordinates $\left(  x_{1},\ldots,x_{n-1},a_{1}\right)  $ and the Jacobi
parabolic coordinates $\left(  \lambda_{1},\ldots,\lambda_{n}\right)  $. In
the case $m=0$ we consider instead of (\ref{GF}) the generating function of
the form%
\begin{equation}%
{\displaystyle\sum\limits_{j=0}^{n}}
z^{n-j}a_{j}\equiv%
{\displaystyle\prod\limits_{j=1}^{n}}
(z-\lambda_{j})\label{GF0}%
\end{equation}
so that $a_{i}(\lambda)=q_{i}(\lambda)$ i.e. the variables $\left(
a_{1},\ldots,a_{n}\right)  $ coincide then with the Vi\`{e}te coordinates
(\ref{defq}) while the variables $x_{i}$ are not present at all. One can say
that this function is a variant of (\ref{GF}) with both $\varepsilon$ and all
$\beta_{i}$ non-present.

Let us now investigate the map between coordinates $\left(  \lambda_{1}%
,\ldots,\lambda_{n}\right)  $ and $\left(  x_{1},\ldots,x_{m},a_{1}%
,\ldots,a_{n-m}\right)  $.

\begin{theorem}
The map from coordinates $\left(  \lambda_{1},\ldots,\lambda_{n}\right)  $ to
$\left(  x_{1},\ldots,x_{m},a_{1},\ldots,a_{n-m}\right)  $ is given by%
\begin{align}
x_{j}^{2}  & =-4\varepsilon\frac{%
{\displaystyle\prod\limits_{k=1}^{n}}
(\beta_{j}-\lambda_{k})}{%
{\displaystyle\prod\limits_{\substack{k=1 \\k\neq j}}^{m}}
(\beta_{j}-\beta_{k})}\text{, \ \ }j=1,\ldots m\label{xj}\\
\left(
\begin{array}
[c]{c}%
a_{1}\\
\vdots\\
a_{n-m}%
\end{array}
\right)   & =M\left(
\begin{array}
[c]{c}%
q_{1}(\lambda)-\rho_{1}(\beta)\\
\vdots\\
q_{n-m}(\lambda)-\rho_{n-m}(\beta)
\end{array}
\right) \label{aj}%
\end{align}
where $M$ is a square matrix with entries given by%
\begin{equation}
M_{ij}=\left\{
\begin{array}
[c]{cc}%
U_{1}^{(m,m-1+i-j)} & \text{for }j\leq i\\
0 & \text{for }j>i
\end{array}
\right.  \text{ \ with }i,j=1,\ldots,n-m\label{M}%
\end{equation}
where $U_{1}^{(m,m-1+i-j)}$ are basic separable potential given by (\ref{U})
with the dimension $n$ replaced by $m$.
\end{theorem}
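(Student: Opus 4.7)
The plan is to derive both formulas from the polynomial identity obtained by clearing denominators in \eqref{GF}. Multiplying both sides by $B_m(z)=\prod_{j=1}^{m}(z-\beta_j)$ yields
\[
B_m(z)\sum_{j=0}^{n-m}z^{n-m-j}a_j - \tfrac{1}{4}\varepsilon\sum_{j=1}^{m}x_j^2\prod_{k\neq j}(z-\beta_k) = \prod_{j=1}^{n}(z-\lambda_j),
\]
an identity between polynomials of degree $n$ in $z$. Setting $z=\beta_j$ here kills the first sum and leaves only the $j$-th term of the second, giving $-\tfrac{1}{4}\varepsilon x_j^2\prod_{k\neq j}(\beta_j-\beta_k)=\prod_k(\beta_j-\lambda_k)$, which rearranges at once to \eqref{xj}.

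For the $a$-coordinates I compare the coefficient of $z^{n-k}$ for $k=1,\ldots,n-m$. Since the $x$-sum has degree at most $m-1$ in $z$ and $n-k\geq m$ throughout this range, it contributes nothing. The right-hand side contributes $q_k(\lambda)$, and using $B_m(z)=\sum_{i=0}^{m}z^{m-i}\rho_i^{(m)}(\beta)$ the first left-hand sum contributes $\sum_{s=0}^{\min(k,m)}\rho_s^{(m)}(\beta)\,a_{k-s}$ (with the conventions $a_0=\rho_0^{(m)}=1$). Splitting off the $s=k$ term $\rho_k^{(m)}$ (equal to $0$ when $k>m$) produces the lower-triangular unipotent system
\[
a_k + \sum_{s=1}^{\min(k-1,m)}\rho_s^{(m)}(\beta)\,a_{k-s} = q_k(\lambda)-\rho_k^{(m)}(\beta),\qquad k=1,\ldots,n-m.
\]
Writing this as $N\vec{a}=\vec{w}$, where $\vec{w}$ is the right-hand vector of \eqref{aj} and $N_{k,i}=\rho_{k-i}^{(m)}(\beta)$ (zero when $k<i$ or $k-i>m$), one has $\vec{a}=N^{-1}\vec{w}$, and it remains to identify $M:=N^{-1}$ with the matrix in \eqref{M}.

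Setting $\ell=i-j\geq 0$, the identity $NM=I$ reduces to the single family of scalar identities
\[
\sum_{s=0}^{\min(\ell,m)}\rho_s^{(m)}(\beta)\,U_1^{(m,\,m-1+\ell-s)} = \delta_{\ell,0}.
\]
The case $\ell=0$ is immediate since $U^{(m-1)}=(1,0,\ldots,0)^{T}$ in the dimension-$m$ system. For $\ell\geq 1$ I use that the recursion matrix $R$ of the dimension-$m$ Benenti system (with $\rho_j^{(m)}$ in place of $q_j$) has characteristic polynomial $\det(zI-R)=B_m(z)=z^m+\rho_1^{(m)}z^{m-1}+\cdots+\rho_m^{(m)}$, which follows by direct expansion along the first column. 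Cayley--Hamilton then gives $\sum_{s=0}^{m}\rho_s^{(m)}U^{(k+m-s)}=0$ for every integer $k$; taking $k=\ell-1$ and first components yields the same sum extended to $s=0,\ldots,m$. The extra terms with $s>\ell$ (present only when $\ell<m$) carry indices $m-1+\ell-s\in[\ell-1,m-2]\subset[0,m-2]$, and $U^{(0)},\ldots,U^{(m-2)}$ all vanish in the first coordinate by the explicit description of the trivial potentials recalled after \eqref{R}. Hence the truncated sum coincides with the full Cayley--Hamilton sum, which is zero, completing the verification of $NM=I$.

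The only substantive step is this last one: the derivation of the triangular system is pure coefficient-matching, but recognising $N^{-1}$ as a matrix of basic separable potentials of a lower-dimensional Benenti system rests on combining two observations — the characteristic polynomial of $R$ is exactly $B_m$, and the potentials $U^{(0)},\ldots,U^{(m-2)}$ are trivial in the first coordinate — which together allow the Cayley--Hamilton relation to be truncated to match the window $s=0,\ldots,\min(\ell,m)$ appearing in the product $NM$.
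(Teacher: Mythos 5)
Your proof is correct and follows essentially the same route as the paper: clearing denominators in (\ref{GF}) and evaluating at $z=\beta_j$ gives (\ref{xj}), and matching coefficients of $z^{n-k}$, $k=1,\ldots,n-m$, gives (\ref{aj}). The paper compresses the second step into the phrase ``careful comparison of coefficients''; you actually carry it out, and your identification of the matrix $M$ as the inverse of the unipotent Toeplitz matrix $N_{ki}=\rho^{(m)}_{k-i}$ --- via the Cayley--Hamilton relation for the dimension-$m$ recursion matrix together with the vanishing of the first components of $U^{(0)},\ldots,U^{(m-2)}$ --- is a correct and complete verification of the step the paper leaves implicit.
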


\begin{proof}
To show (\ref{xj}) let us first multiply both sides of (\ref{GF}) by
$B_{m}(z)=%
{\textstyle\prod\limits_{k=1}^{m}}
(z-\beta_{k})$. We receive%
\begin{equation}
B_{m}(z)%
{\displaystyle\sum\limits_{k=0}^{n-m}}
z^{n-m-k}a_{k}-\frac{1}{4}\varepsilon B_{m}(z)\sum_{k=1}^{m}\frac{x_{k}^{2}%
}{z-\beta_{k}}\equiv%
{\displaystyle\prod\limits_{k=1}^{n}}
(z-\lambda_{k})\label{porownaj}%
\end{equation}
Let us now insert $z=\beta_{j}$ in (\ref{porownaj}). Since $B_{m}(\beta
_{j})=0$ we obtain
\[
-\frac{1}{4}\varepsilon x_{j}^{2}%
{\displaystyle\prod\limits_{\substack{k=1 \\k\neq j}}^{m}}
(\beta_{j}-\beta_{k})=%
{\displaystyle\prod\limits_{k=1}^{n}}
(\beta_{j}-\lambda_{k})
\]
from which (and since $1/\varepsilon=\varepsilon$) we obtain (\ref{xj}). The
formula (\ref{aj}) can be obtained by a careful comparison of coefficients of
polynomials in (\ref{porownaj}).
\end{proof}

By direct comparison of the coefficients in (\ref{porownaj}) one can also show
that%
\begin{equation}
q_{i}=\sum_{j=0}^{n-m}\rho_{i-j}a_{j}+\frac{1}{4}\varepsilon\sum_{j=1}%
^{m}\frac{\partial\rho_{i-(n-m)}}{\partial\beta_{j}}x_{j}^{2}\text{,
\ }i=1,\ldots,n\label{xrV}%
\end{equation}
which gives us the map from the variables $\left(  x_{1},\ldots,x_{m}%
,a_{1},\ldots a_{n-m}\right)  $ to the Vi\`{e}te variables (\ref{defq}). In
the above formula we use the notation $\rho_{i}=0$ for $i<0$ or for $i>m$ and
$\rho_{0}=1$.

Let us now turn to the problem of finding flat coordinates for the metric $G$
generated by (\ref{BenSC}). Consider the polynomial map (compare with formula
(12) in \cite{blaser})%
\begin{equation}
a_{i}=r_{i}+\frac{1}{4}\sum\limits_{j=1}^{i-1}r_{j}r_{i-j},\text{
\ \ }i=1,\ldots,n-m\label{ar}%
\end{equation}
from the variables $\left(  r_{1},\ldots,r_{n-m}\right)  $ to $\left(
a_{1},\ldots,a_{n-m}\right)  $. This map (\ref{ar}) is injective due to its
triangular structure. By theorem of Bialynicki-Birula and Rosenlicht
\cite{BBR} it is then also surjective and therefore bijective and since its
Jacobian, as it is easy to see from (\ref{ar}), is equal to $1$, by a variant
of Jacobian Conjecture (see for example \cite{Kurdyka} or \cite{Bass}) the map
inverse to (\ref{ar}) is also a polynomial map. Thus, we conclude that

\begin{proposition}
The transformation (\ref{ar}) is bijective and its inverse is also a
polynomial map.
\end{proposition}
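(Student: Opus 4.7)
The plan is to exploit the lower-triangular shape of the system (\ref{ar}). Writing the $i$-th equation as $a_i=r_i+P_i(r_1,\ldots,r_{i-1})$ with $P_1\equiv 0$ and $P_i=\frac{1}{4}\sum_{j=1}^{i-1}r_jr_{i-j}$ for $i\ge 2$, injectivity follows by a straightforward induction on $i$: if $(r_1,\ldots,r_{n-m})$ and $(r_1',\ldots,r_{n-m}')$ map to the same $(a_1,\ldots,a_{n-m})$, then $r_1=a_1=r_1'$, and assuming $r_j=r_j'$ for $j<i$, one gets $r_i=a_i-P_i(r_1,\ldots,r_{i-1})=a_i-P_i(r_1',\ldots,r_{i-1}')=r_i'$. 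This is the statement already recorded in the paper's motivating discussion.

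The cleanest route to the polynomial inverse — and the one I would actually carry out in detail — is to invert the recursion directly rather than invoke the Bialynicki--Birula--Rosenlicht theorem and the Jacobian conjecture machinery quoted above. Define $r_1:=a_1$, and once $r_1,\ldots,r_{i-1}$ have been produced as explicit polynomials in $a_1,\ldots,a_{i-1}$, set
\begin{equation*}
r_i:=a_i-\tfrac{1}{4}\sum_{j=1}^{i-1}r_j(a_1,\ldots,a_{j})\,r_{i-j}(a_1,\ldots,a_{i-j}).
\end{equation*}
By construction each $r_i$ is a polynomial in $a_1,\ldots,a_i$ (with half-integer coefficients), and substituting the original formula (\ref{ar}) for $a_i$ shows that this recursion recovers the input $r_i$ — so the displayed expressions simultaneously provide surjectivity and exhibit the inverse as a polynomial map. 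Alternatively, staying with the authors' phrasing, one notes that the Jacobian matrix $\partial a_i/\partial r_j$ is lower triangular with $\partial a_i/\partial r_i=1$, so $\det\bigl(\partial a/\partial r\bigr)\equiv 1$; combined with injectivity, Bialynicki--Birula--Rosenlicht yields surjectivity, and the triangular case of the Jacobian conjecture (which is trivial here precisely because of the recursion above) gives polynomiality of the inverse.

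I do not foresee a genuine obstacle: the unipotent lower-triangular structure is doing all the work, and both the direct inversion and the abstract route require only bookkeeping. The one place where care is warranted is the base step $i=1$ (where the sum in (\ref{ar}) is empty, so $a_1=r_1$) and the verification that the Jacobian is literally $1$ rather than a nonzero constant depending on the $r_j$'s — both are immediate from the explicit form of $P_i$.
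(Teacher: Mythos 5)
Your proof is correct, and it differs from the paper's in a substantive way on the second half of the claim. The paper establishes injectivity exactly as you do (from the unipotent lower-triangular structure), but then gets surjectivity by invoking the Bia\l ynicki-Birula--Rosenlicht theorem on injective polynomial self-maps of $\mathbf{R}^{n}$, and gets polynomiality of the inverse by observing that the Jacobian is identically $1$ and appealing to Jacobian-conjecture-type results of Kurdyka--Rusek and Bass--Connell--Wright; your ``alternative route'' paragraph is essentially a restatement of that argument. Your primary route --- defining $r_{1}:=a_{1}$ and $r_{i}:=a_{i}-\tfrac{1}{4}\sum_{j=1}^{i-1}r_{j}(a)r_{i-j}(a)$ recursively, which is well defined since $j<i$ and $i-j<i$ for every term in the sum --- is more elementary and constructive: it exhibits the inverse explicitly as a polynomial map and yields surjectivity and injectivity simultaneously without any external theorems, and it also makes the degree and coefficient structure of the inverse visible (the coefficients are dyadic rationals, e.g.\ $r_{3}=a_{3}-\tfrac{1}{2}a_{1}a_{2}+\tfrac{1}{8}a_{1}^{3}$, not half-integers as you wrote, though this is immaterial). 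What the paper's route buys is brevity given the references and the fact that it applies verbatim to any injective polynomial map with constant nonzero Jacobian, not just triangular ones; what your route buys is self-containedness and an explicit formula, which is arguably preferable here since the triangular structure makes the inversion trivial anyway. Both arguments are sound.
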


Combining the maps (\ref{xj})-(\ref{aj}) and (\ref{ar}) we obtain the map
between the variables $(\lambda_{1},\ldots,\lambda_{n})$ and $\left(
x_{1},\ldots,x_{m},r_{1},\ldots r_{n-m}\right)  $. We are now in position to
formulate the main theorem of this paper.

\begin{theorem}
\label{mainthm}The metric $G$ defined by (\ref{GBen}) attains in coordinates
$\left(  x_{1},\ldots,x_{m},r_{1},\ldots r_{n-m}\right)  $ the form%
\[
G=\left(
\begin{array}
[c]{cc}%
\varepsilon I_{m\times m} & 0_{m\times(n-m)}\\
0_{(n-m)\times m} & J_{(n-m)\times(n-m)}%
\end{array}
\right)
\]
where $I_{k\times k}$ denotes the $k\times k$ identity matrix and $J_{k\times
k}$ denotes the $k\times k$ matrix given by $\left(  J_{k\times k}\right)
_{ij}=\delta_{i,k-j+1}$ i.e. with entries equal to zero everywhere except on
the antidiagonal where all the entries are equal to $1$. Naturally,
$0_{k\times r}$ denotes the $k\times r$ zero matrix.
\end{theorem}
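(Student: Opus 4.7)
The plan is to derive a single generating-function identity for the metric that is valid in every coordinate system, and then read off the components of $G$ in $(x,r)$-coordinates by coefficient comparison. Set $F(z,\lambda):=\prod_{k=1}^n(z-\lambda_k)/B_m(z)$ for the right-hand side of (\ref{GF}); a direct $\lambda$-differentiation gives $\partial F(z)/\partial\lambda_k=-F(z)/(z-\lambda_k)$. Combining this with the diagonal form (\ref{GBen}) of $G$, the partial-fraction relation $1/[(z-\lambda_k)(w-\lambda_k)]=[1/(w-\lambda_k)-1/(z-\lambda_k)]/(z-w)$ and the standard Lagrange identity
\[
\sum_{k=1}^n\frac{B_m(\lambda_k)}{\Delta_k(z-\lambda_k)}=\frac{B_m(z)}{\prod_l(z-\lambda_l)}-\chi,\qquad\chi=\begin{cases}0,&m<n,\\1,&m=n,\end{cases}
\]
is expected to yield the coordinate-free master identity
\[
G(dF(z),dF(w))=\frac{F(z)-F(w)}{z-w}.
\]

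Since both sides are scalars, the identity may be expanded directly in $(x,r)$-coordinates. From (\ref{GF}) one reads off $\partial F(z)/\partial x_j=-\varepsilon x_j/[2(z-\beta_j)]$ and $\partial F(z)/\partial a_j=z^{n-m-j}$, and applying the chain rule through the map (\ref{ar})---which can be written compactly as $4a_j=\sum_{k+l=j}r_k r_l$ with the convention $r_0:=2$---gives
\[
\frac{\partial F(z)}{\partial r_k}=\tfrac{1}{2}R_{n-m-k}(z),\qquad R_N(z):=\sum_{j=0}^N r_j z^{N-j}.
\]
Meanwhile the right-hand side of the master identity splits naturally as
\[
\frac{F(z)-F(w)}{z-w}=\sum_{j=0}^{n-m}a_j\,\frac{z^{n-m-j}-w^{n-m-j}}{z-w}+\frac{\varepsilon}{4}\sum_{j=1}^m\frac{x_j^2}{(z-\beta_j)(w-\beta_j)}.
\]

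Taking the residue at $z=\beta_i$ on both sides of the master identity produces an identity of rational functions in $w$; comparing its residues at $w=\beta_l$ forces $G^{x_i x_l}=\varepsilon\delta_{il}$, while comparing its polynomial-in-$w$ parts forces $\sum_j G^{x_i r_j}\,R_{n-m-j}(w)=0$, which by the linear independence (distinct degrees) of $\{R_{n-m-j}\}_{j=1}^{n-m}$ forces $G^{x_i r_j}=0$. This yields the $x$-block $\varepsilon I_{m\times m}$ together with the vanishing of the off-diagonal blocks.

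The main remaining obstacle is the $r$-block. After the preceding steps, the polynomial-in-$(z,w)$ part of the master identity reads
\[
\tfrac{1}{4}\sum_{i,j=1}^{n-m}G^{r_i r_j}R_{n-m-i}(z)R_{n-m-j}(w)=\sum_{j=0}^{n-m}a_j\,\frac{z^{n-m-j}-w^{n-m-j}}{z-w},
\]
and the distinct degrees of the $R_{n-m-i}(z)$ make this uniquely solvable for the $G^{r_i r_j}$. Testing the ansatz $G^{r_i r_j}=\delta_{i+j,\,n-m+1}$ collapses the left-hand side to $\tfrac{1}{4}\sum_{i=1}^{n-m}R_{n-m-i}(z)R_{i-1}(w)$, so verification reduces to the polynomial identity
\[
\sum_{i=1}^{n-m}R_{n-m-i}(z)\,R_{i-1}(w)=4\sum_{j=0}^{n-m}a_j\,\frac{z^{n-m-j}-w^{n-m-j}}{z-w}.
\]
This last identity is a direct reflection of the Cauchy-product structure $4a_j=\sum_{k+l=j}r_k r_l$: a short index manipulation turns the coefficient of $z^p w^q$ on the left into $\sum_{s+t=n-m-p-q-1}r_s r_t=4a_{n-m-p-q-1}$, which matches the coefficient on the right. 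This identifies the $r$-block with $J_{(n-m)\times(n-m)}$ and completes the proof.
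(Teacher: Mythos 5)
Your proposal is correct, and it takes a genuinely different route from the paper. The paper's proof is essentially a two-step reduction: it writes $G=\sum_{k=0}^{m}\rho_{k}^{(m)}G_{k}$, imports the known form of the tensors $G_{k}$ in Vi\`{e}te coordinates from the reference on natural coordinates for Benenti systems, and then pushes $G$ forward through the Jacobian of the map (\ref{xrV}); the alternative it mentions is a brute-force Jacobian computation. You instead establish the single coordinate-free identity $G(dF(z),dF(w))=\bigl(F(z)-F(w)\bigr)/(z-w)$ for the generating function $F(z)=\prod_{k}(z-\lambda_{k})/B_{m}(z)$ of (\ref{GF}), which follows correctly from (\ref{GBen}) via the partial-fraction split and the Lagrange interpolation identity (the constant $\chi$ cancels between the $z$- and $w$-terms, so the cases $m<n$ and $m=n$ are treated uniformly). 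All subsequent steps check out: the derivatives $\partial F/\partial x_{j}$ and $\partial F/\partial r_{k}=\tfrac12 R_{n-m-k}(z)$ are right (the convention $r_{0}=2$ correctly encodes (\ref{ar}) as $4a_{j}=\sum_{k+l=j}r_{k}r_{l}$), the residue/polynomial-part separation legitimately isolates the three blocks, the distinct degrees of the $R_{N-j}$ give uniqueness, and the final Cauchy-product computation verifying $G^{r_{i}r_{j}}=\delta_{i+j,n-m+1}$ is exactly the coefficient identity $\sum_{s+t=N-p-q-1}r_{s}r_{t}=4a_{N-p-q-1}$. What your approach buys is self-containedness (no appeal to the external formula for $G_{k}$ in Vi\`{e}te coordinates and no explicit inversion of a Jacobian), a uniform treatment of all $0\le m\le n$ at once, and as a by-product the reusable master identity, which also packages the Killing tensors' generating structure; the only caveat worth stating explicitly is that the coefficient comparison determines $G$ on the dense open set where all $x_{i}\neq 0$, which is harmless since the coordinates are only defined there.
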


\begin{proof}
One can show this theorem by directly calculating the Jacobian of the map
(\ref{xj})-(\ref{aj})-(\ref{ar}) but this yields a very tedious calculation.
Alternatively, by solving (\ref{BenSC}) with respect to $H_{i}$ we see that%
\[
G=\sum_{k=0}^{m}\rho_{k}^{(m)}G_{k}%
\]
(compare with (\ref{V})). The form of tensors $G_{k}$ in Vi\`{e}te coordinates
has been found in \cite{blaser} so one can easily first transform the tensor
$G$ to the Vi\`{e}te coordinates, and then to use the inverse of the Jacobian
of the map (\ref{xrV}) to transform it to the variables $(x,r)$.
\end{proof}

Thus, the variables $\left(  x_{1},\ldots,x_{m},r_{1},\ldots r_{n-m}\right)  $
are flat but non-orthogonal coordinates for the metric $G$ (while the
separation coordinates $\left(  \lambda_{1},\ldots,\lambda_{n}\right)  $ are
orthogonal but not flat, see (\ref{GBen})). It is now elementary to find the
transformation form coordinates $\left(  x_{1},\ldots,x_{m},r_{1},\ldots
r_{n-m}\right)  $ to the pseudo-Euclidean coordinates for $G$. However, the
formulas for Killing tensors and potentials (see below) become much less
transparent in these coordinates and this is why we stop at flat coordinates
$\left(  x_{1},\ldots,x_{m},r_{1},\ldots r_{n-m}\right)  $.

\begin{corollary}
The signature $(n_{+},n_{-})$ (where $n_{+}$ and $n_{-}$ is the number of
positive respective negative eigenvalues of $G$) of the metric $G$ is (in the
real case) given by%
\begin{align*}
\left(  n_{+},n_{-}\right)   & =\left(  n-\left[  \frac{n-m}{2}\right]
,\left[  \frac{n-m}{2}\right]  \right)  \text{ \ in the region where
}\varepsilon=+1\\
\left(  n_{+},n_{-}\right)   & =\left(  n-m-\left[  \frac{n-m}{2}\right]
,\left[  \frac{n-m}{2}\right]  +m\right)  \text{ \ in the region where
}\varepsilon=-1
\end{align*}
where $\left[  \alpha\right]  $ denotes the integer part of the number
$\alpha$.
\end{corollary}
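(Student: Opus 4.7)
The plan is to invoke Sylvester's law of inertia: the signature is a coordinate-invariant of the metric, so it suffices to compute it for the block matrix representation given in Theorem \ref{mainthm}. Since that representation is block-diagonal, the signature is the sum of the signatures of the two blocks $\varepsilon I_{m\times m}$ and $J_{(n-m)\times(n-m)}$, and the first block is trivial: it contributes $m$ positive eigenvalues when $\varepsilon=+1$ and $m$ negative eigenvalues when $\varepsilon=-1$.

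The main (and essentially only) computation is thus the signature of the antidiagonal $k\times k$ matrix $J_{k\times k}$ with $k=n-m$. I would argue as follows. Observe that $J_{k\times k}$ is the matrix of the permutation $i\mapsto k-i+1$, hence $J_{k\times k}^2 = I_{k\times k}$, so its eigenvalues lie in $\{+1,-1\}$. An eigenvector with eigenvalue $+1$ satisfies $v_i = v_{k-i+1}$ (palindromic vectors), while one with eigenvalue $-1$ satisfies $v_i = -v_{k-i+1}$ (antipalindromic vectors). Explicit bases are given by $\{e_i + e_{k-i+1} : 1\le i \le \lceil k/2\rceil\}$ for the $+1$-eigenspace and $\{e_i - e_{k-i+1} : 1 \le i \le \lfloor k/2 \rfloor\}$ for the $-1$-eigenspace, giving multiplicities
\[
n_+(J_{k\times k}) = k - \left[\tfrac{k}{2}\right], \qquad n_-(J_{k\times k}) = \left[\tfrac{k}{2}\right].
\]

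Combining, in the region $\varepsilon=+1$ one gets $n_+ = m + (n-m) - \left[(n-m)/2\right] = n - \left[(n-m)/2\right]$ and $n_- = \left[(n-m)/2\right]$, while in the region $\varepsilon=-1$ one gets $n_+ = (n-m) - \left[(n-m)/2\right]$ and $n_- = m + \left[(n-m)/2\right]$, which are precisely the formulas stated in the corollary.

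There is no real obstacle here: the only mildly nontrivial point is handling the parity of $k=n-m$ uniformly, and this is taken care of by the identity $\lceil k/2 \rceil = k - \lfloor k/2 \rfloor$, which gives a single closed-form answer valid for both $k$ even and $k$ odd.
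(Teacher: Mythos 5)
Your proof is correct and is essentially the argument the paper intends: the corollary is stated without proof as an immediate consequence of Theorem \ref{mainthm}, and the only content is exactly the signature count of the block-diagonal form $\operatorname{diag}(\varepsilon I_{m\times m}, J_{(n-m)\times(n-m)})$, which you carry out correctly via the $\pm1$-eigenspaces of $J$.
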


This corollary means that the metric $G$ is Euclidean (in the appropriate
regions, where $\varepsilon=+1$) only in the elliptic case and in the
parabolic case (i.e. for $m=n$ and $m=n-1$), otherwise it is pseudo-Euclidean.
Note also that in case $m=0$ both expressions coincide.

We will now investigate the structure of the Killing tensors $A_{r}$ (defined
in (\ref{A})) and separable potentials $V$ (defined through formulas
(\ref{V}), (\ref{U}) and (\ref{R})) in the flat coordinates\ $\left(
x_{1},\ldots,x_{m},r_{1},\ldots r_{n-m}\right)  $ in the elliptic ($m=n$)
case, in the parabolic ($m=n-1$) case and in the case of $m=0$ (in the case of
arbitrary $m$ the formulas become very complicated and non-transparent). Let
us start with the elliptic case $m=n$. The form of the $(2,0)$-type tensors
$A_{r}$ in flat coordinates can be calculated by the usual transformation
rules for tensors. The result is presented below.

\begin{proposition}
For $m=n$ the $(2,0)$-tensors $A_{s}$ ($s=1,\ldots,n$) defined in (\ref{A})
attain in the flat coordinates $\left(  x_{1},\ldots,x_{n}\right)  $ the form%
\begin{align*}
A_{s}^{ij}  & =\frac{1}{4}\frac{\partial^{2}\rho_{s}}{\partial\beta
_{i}\partial\beta_{j}}x_{i}x_{j}\text{, \ \ }i\neq j\\
A_{s}^{ii}  & =-\varepsilon\frac{\partial\rho_{s}}{\partial\beta_{i}}-\frac
{1}{4}\sum\limits_{\substack{k=1 \\k\neq i}}^{n}\frac{\partial^{2}\rho_{s}%
}{\partial\beta_{i}\partial\beta_{k}}x_{k}^{2}%
\end{align*}
(no summation over repeated indices is performed here) where $\rho_{s}%
=\rho_{s}^{(n)}(\beta_{1},\ldots,\beta_{n})$ is given by (\ref{defrho}).
\end{proposition}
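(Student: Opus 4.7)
The plan is to transform $A_s$ from the separation coordinates $\lambda$ to the flat coordinates $x$ by the usual $(2,0)$-tensor rule. From (\ref{GBen})--(\ref{Ki}) the tensor $A_s(\lambda)=K_sG$ is diagonal with $A_s^{kk}(\lambda)=-\frac{\partial q_s}{\partial\lambda_k}\frac{B_n(\lambda_k)}{\Delta_k}$, and the Jacobian of the change of coordinates can be read off by logarithmic differentiation of (\ref{xj}) (or equivalently by implicit differentiation of (\ref{GF}) at $z=\beta_i$), giving $\frac{\partial x_i}{\partial\lambda_k}=\frac{x_i}{2(\lambda_k-\beta_i)}$. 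The transformation law then collapses to the single sum $A_s^{ij}(x)=\sum_k\frac{\partial x_i}{\partial\lambda_k}\frac{\partial x_j}{\partial\lambda_k}A_s^{kk}(\lambda)$.

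To handle all indices $s$ at once I would introduce the generating polynomial $\mathcal A^{ij}(z):=\sum_s A_s^{ij}z^{n-s}$. Differentiating $\prod_l(z-\lambda_l)=\sum_s q_sz^{n-s}$ with respect to $\lambda_k$ gives $\sum_s\frac{\partial q_s}{\partial\lambda_k}z^{n-s}=-\prod_{l\ne k}(z-\lambda_l)$, so
\[
\mathcal A^{kk}(z;\lambda)=\frac{\prod_l(\lambda_k-\beta_l)}{\prod_{l\ne k}(\lambda_k-\lambda_l)}\,\prod_{l\ne k}(z-\lambda_l).
\]
For off-diagonal entries $i\ne j$, the factor $(\lambda_k-\beta_i)(\lambda_k-\beta_j)$ from the Jacobian squared divides $\prod_l(\lambda_k-\beta_l)$ cleanly, leaving the polynomial $\prod_{l\ne i,j}(\lambda_k-\beta_l)$ in the numerator. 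The Lagrange interpolation identity (any polynomial $Q$ of degree $\le n-1$ satisfies $\sum_k Q(\lambda_k)\prod_{l\ne k}(z-\lambda_l)/\prod_{l\ne k}(\lambda_k-\lambda_l)=Q(z)$) collapses the sum to $\mathcal A^{ij}(z;x)=\frac{x_ix_j}{4}\prod_{l\ne i,j}(z-\beta_l)$, and the elementary-symmetric identity $\frac{\partial^2\rho_s^{(n)}}{\partial\beta_i\partial\beta_j}=\rho_{s-2}^{(n-2)}(\beta^{(ij)})$ extracts the coefficient of $z^{n-s}$ as required.

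The main obstacle is the diagonal case $i=j$, where $(\lambda_k-\beta_i)^{-2}$ appears in the Jacobian squared and the naive Lagrange argument no longer applies. I would split
\[
\frac{\prod_{l\ne i}(\lambda_k-\beta_l)}{\lambda_k-\beta_i}=\frac{\hat P_i(\lambda_k)-\hat P_i(\beta_i)}{\lambda_k-\beta_i}+\frac{\hat P_i(\beta_i)}{\lambda_k-\beta_i},\qquad \hat P_i(w):=\prod_{l\ne i}(w-\beta_l),
\]
so that the first summand is a polynomial of degree $n-2$ amenable to the same Lagrange identity, while the second, singular summand is handled by partial-fraction decomposition of $1/\bigl[(w-\beta_i)\prod_l(w-\lambda_l)\bigr]$. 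Using $x_i^2\hat P_i(\beta_i)=-4\varepsilon\prod_l(\beta_i-\lambda_l)$ from (\ref{xj}), this singular part simplifies to $\varepsilon\prod_l(z-\lambda_l)/(z-\beta_i)$. Substituting the defining identity (\ref{GF}) for $\prod_l(z-\lambda_l)$ in the form $\prod_l(z-\beta_l)-\tfrac{\varepsilon}{4}\sum_j x_j^2\prod_{l\ne j}(z-\beta_l)$ produces the crucial cancellation of the spurious pole at $z=\beta_i$ and leaves $\mathcal A^{ii}(z;x)=\varepsilon\hat P_i(z)-\tfrac{1}{4}\sum_{j\ne i}x_j^2\prod_{l\ne i,j}(z-\beta_l)$; reading off the coefficient of $z^{n-s}$ via $\rho_{s-1}^{(n-1)}(\beta^{(i)})=-\partial\rho_s/\partial\beta_i$ then yields the stated diagonal formula.
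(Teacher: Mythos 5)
Your proposal is correct, and it is worth noting that the paper itself offers no proof of this proposition: it only remarks that the result ``can be calculated by the usual transformation rules for tensors'' and then displays the answer, the computation evidently having been done directly (by hand or in Maple). Your argument is precisely that transformation-rule computation, but organized so that it closes for general $n$: starting from $A_s^{kk}(\lambda)=-\frac{\partial q_s}{\partial\lambda_k}\frac{B_n(\lambda_k)}{\Delta_k}$ and $\frac{\partial x_i}{\partial\lambda_k}=\frac{x_i}{2(\lambda_k-\beta_i)}$ (both of which follow correctly from (\ref{GBen}), (\ref{Ki}) and logarithmic differentiation of (\ref{xj})), packaging the $A_s^{ij}$ into the generating polynomial $\sum_s A_s^{ij}z^{n-s}$ and collapsing the sum over $k$ by Lagrange interpolation is exactly what turns the brute-force Jacobian calculation into closed-form identities. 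I have checked the delicate diagonal step: interpolating $w\mapsto 1/(w-\beta_i)$ at the nodes $\lambda_k$ gives $S(z)=\frac{1}{z-\beta_i}\bigl(1-\prod_l(z-\lambda_l)/\prod_l(\beta_i-\lambda_l)\bigr)$, and combining this with (\ref{xj}) and then substituting (\ref{GF}) for $\prod_l(z-\lambda_l)$ does cancel the pole at $z=\beta_i$ and yields $\sum_s A_s^{ii}z^{n-s}=\varepsilon\prod_{l\neq i}(z-\beta_l)-\frac{1}{4}\sum_{j\neq i}x_j^2\prod_{l\neq i,j}(z-\beta_l)$, whose coefficients are exactly the stated $A_s^{ii}$ via $\rho_{s-1}^{(n-1)}(\beta^{(i)})=-\partial\rho_s/\partial\beta_i$; the off-diagonal case is immediate from the degree-$(n-2)$ Lagrange identity. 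As a consistency check, the $s=1$ coefficients of your generating polynomials reproduce $G=\varepsilon I$, in agreement with Theorem \ref{mainthm} for $m=n$. In short, your proof is complete and supplies a derivation where the paper only asserts the result.
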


It is not possible to present the general formula for the potentials
$V_{r}^{(n,k)}$ in flat coordinates but we can at least present few potentials
with low $k$. We denote $x=(x_{1},\ldots,x_{n})^{T}$ and by $\left(
\cdot,\cdot\right)  $ we denote the usual scalar product in $\mathbf{R}^{n}$.
Further, denote
\begin{align*}
\Gamma_{s}  & =-\operatorname*{diag}\left(  \frac{\partial\rho_{s}^{(n)}%
}{\partial\beta_{1}},\cdots,\frac{\partial\rho_{s}^{(n)}}{\partial\beta_{n}%
}\right)  \text{ \ \ \ }s=1,\ldots,n\\
\Delta & =\operatorname*{diag}(\beta_{1},\ldots,\beta_{n})\\
W  & =1+\frac{1}{4}\varepsilon(x,\Delta^{-1}x)
\end{align*}
(and remember that $\rho_{s}=0$ for $s<0$ and $s>m$). In the above notation,
we obtain, after some calculations in Maple%
\begin{align*}
V_{s}^{(n,2)}  & =\frac{1}{4}\varepsilon\left(  \Gamma_{s}x,\Delta
^{2}x\right)  +\frac{1}{16}\left(  \Gamma_{s}x,x\right)  (x,\Delta x)+\frac
{1}{64}\varepsilon(\Gamma_{s}x,x)(x,x)^{2}+\frac{1}{16}(x,x)(\Gamma
_{s}x,\Delta x)\\
V_{s}^{(n,1)}  & =\frac{1}{4}\varepsilon(\Gamma_{s}x,\Delta x)+\frac{1}%
{16}\left(  \Gamma_{s}x,x\right)  (x,x)\\
V_{s}^{(n,0)}  & =\frac{1}{4}\varepsilon(\Gamma_{s}x,x)\\
V_{s}^{(n,-1)}  & =\frac{1}{4}\varepsilon\frac{(\Gamma_{s}x,\Delta^{-1}x)}%
{W}\\
V_{s}^{(n,-2)}  & =\frac{1}{W^{2}}\left(  \frac{1}{4}\varepsilon\left(
\Gamma_{s}x,\Delta^{-2}x\right)  +\frac{1}{16}\left(  \Gamma_{s-1}%
x,\Delta^{-1}x\right)  (x,\Delta^{-1}x)-\frac{1}{16}(\Gamma_{s-1}%
x,x)(x,\Delta^{-2}x)^{2}\right)
\end{align*}
For higher positive or negative $k$ these potentials quickly become very
complicated. Since $\Gamma_{1}=I$ and $\Gamma_{0}=0$ (due to (\ref{defrho}))
we see that%

\begin{align*}
V_{1}^{(n,2)}  & =\frac{1}{4}\varepsilon\left(  x,\Delta^{2}x\right)
+\frac{1}{8}(x,\Delta x)\left(  x,x\right)  +\frac{1}{64}\varepsilon
(x,x)^{3}\\
V_{1}^{(n,1)}  & =\frac{1}{4}\varepsilon(x,\Delta x)+\frac{1}{16}(x,x)^{2}\\
V_{1}^{(n,0)}  & =\frac{1}{4}\varepsilon(x,x)\\
V_{1}^{(n,-1)}  & =\frac{1}{4}\varepsilon\frac{(x,\Delta^{-1}x)}{W}\\
V_{1}^{(n,-2)}  & =\frac{1}{4}\varepsilon\frac{\left(  x,\Delta^{-2}x\right)
}{W^{2}}%
\end{align*}

This family of potentials has been obtained for the first time in
\cite{StefanMon} (see also \cite{stefan}). The potential $V_{1}^{(n,1)}$ is
the well known Garnier potential while $V_{1}^{(n,0)}$ is just harmonic
oscillator. Note that both in the Killing tensors $A_{s}$ and in the
potentials $V_{s}^{(n,k)}$ the sign $\varepsilon$ is present only at terms
with odd powers of $(x,x)$ which is clearly due to (\ref{xj}).

Let us now turn to the parabolic case $m=n-1$. In this case the structure of
the Killing tensors $A_{r}$ is more complicated

\begin{proposition}
For $m=n-1$ the tensors $A_{s}$ ($s=1,\ldots,n$) attain in the flat
coordinates $\left(  x_{1},\ldots,x_{n-1},r\right)  $ the form%
\begin{align*}
A_{s}^{ij}  & =\frac{1}{4}\frac{\partial^{2}\rho_{s-1}}{\partial\beta
_{i}\partial\beta_{j}}x_{i}x_{j}\text{, \ \ }i\neq j,\text{ }i,j=1,\ldots
,n-1\\
A_{s}^{ii}  & =-\varepsilon\frac{\partial\rho_{s}}{\partial\beta_{i}}-\frac
{1}{4}\sum\limits_{\substack{k=1 \\k\neq i}}^{n-1}\frac{\partial^{2}\rho
_{s-1}}{\partial\beta_{i}\partial\beta_{k}}x_{k}^{2}-\varepsilon\frac
{\partial\rho_{s-1}}{\partial\beta_{i}}r\text{, \ }i=1,\ldots,n-1\\
A_{s}^{in}  & =A_{s}^{ni}=\frac{1}{2}\frac{\partial\rho_{s-1}}{\partial
\beta_{i}}x_{i}\text{, \ }i=1,\ldots,n-1\\
A_{s}^{nn}  & =\rho_{s-1}%
\end{align*}
(again, with no summation over repeated indices) where $\rho_{s}=\rho
_{s}^{(n-1)}(\beta_{1},\ldots,\beta_{n-1})$ is given by (\ref{defrho}).
\end{proposition}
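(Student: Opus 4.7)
The strategy is to apply the contravariant tensor transformation law directly from separation coordinates $\lambda$, where $A_s$ is diagonal: from (\ref{A}), (\ref{Ki}), (\ref{GBen}) one reads off
\[
A_s^{kk}\big|_\lambda = -\frac{\partial q_s}{\partial \lambda_k}\cdot\frac{B_{n-1}(\lambda_k)}{\Delta_k}.
\]
For $n-m=1$ formula (\ref{aj}) collapses to $r = q_1(\lambda) - \rho_1^{(n-1)}(\beta)$, so $\partial r/\partial \lambda_k = -1$; logarithmic differentiation of (\ref{xj}) yields $\partial x_i/\partial \lambda_k = x_i/[2(\lambda_k-\beta_i)]$. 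Writing $y^a$ for the flat coordinates $(x_1,\ldots,x_{n-1},r)$, the transformation law becomes
\[
A_s^{ab}\big|_{(x,r)} = -\sum_{k=1}^{n}\frac{\partial y^a}{\partial\lambda_k}\frac{\partial y^b}{\partial\lambda_k}\frac{\partial q_s}{\partial\lambda_k}\frac{B_{n-1}(\lambda_k)}{\Delta_k},
\]
which splits into four cases.

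Three of them are routine. For $(a,b)=(n,n)$, $(i,n)$ or off-diagonal $(i,j)$ with $i\ne j\le n-1$, the poles $(\lambda_k-\beta_i)^{-1}$ introduced by the Jacobian are absorbed into $B_{n-1}(\lambda_k)=(\lambda_k-\beta_i)B_{n-2}^{(i)}(\lambda_k)$ (or into the corresponding double factor), leaving a polynomial sum in the $\lambda_k$. Each such sum is then evaluated by the Lagrange identity
\[
\sum_{k=1}^{n}\frac{q_s(\hat\lambda_k)\,P(\lambda_k)}{\Delta_k} = [z^{n-1-s}]P(z), \qquad \deg P\le n-1,
\]
combined with $\partial q_s/\partial\lambda_k = -q_{s-1}(\hat\lambda_k)$ and the obvious relations $\partial_{\beta_i}B_{n-1}=-B_{n-2}^{(i)}$ and $\partial^2_{\beta_i\beta_j}B_{n-1}=B_{n-3}^{(ij)}$. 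This directly produces $A_s^{nn}=\rho_{s-1}$, $A_s^{in}=\tfrac12(\partial\rho_{s-1}/\partial\beta_i)x_i$, and $A_s^{ij}=\tfrac14(\partial^2\rho_{s-1}/\partial\beta_i\partial\beta_j)x_ix_j$.

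The main obstacle is the diagonal $A_s^{ii}$, where one factor $(\lambda_k-\beta_i)$ survives after cancellation with $B_{n-1}(\lambda_k)$ so that the bare Lagrange identity no longer applies. One handles it by expanding $q_{s-1}(\hat\lambda_k)=\sum_{j=0}^{s-1}\lambda_k^{s-1-j}q_j(\lambda)$ (obtained by iterating $q_s(\lambda)=q_s(\hat\lambda_k)-\lambda_k q_{s-1}(\hat\lambda_k)$) and evaluating each resulting sum by residues on the Riemann sphere: the residue at $z=\beta_i$ yields $4\varepsilon\beta_i^{s-1-j}/x_i^2$ after using (\ref{xj}), while a nonzero residue at infinity arises whenever $j\le s-3$. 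Substituting (\ref{xrV}) to re-express $q_j(\lambda)$ in the $(x,r)$ variables, the residue-at-$\beta_i$ contributions collapse into the $-\varepsilon\,\partial\rho_s/\partial\beta_i$ and $-\varepsilon(\partial\rho_{s-1}/\partial\beta_i)\,r$ terms via the identity
\[
\sum_{j=0}^{s-1}\rho_j^{(n-1)}\beta_i^{s-1-j}=-\frac{\partial\rho_s^{(n-1)}}{\partial\beta_i},
\]
which follows by iterating $\rho_j^{(n-1)}=\rho_j^{(n-2)}(\hat\beta_i)-\beta_i\rho_{j-1}^{(n-2)}(\hat\beta_i)$, while the residue-at-infinity contributions are precisely what is needed to excise the $k=i$ diagonal term from the $x$-quadratic sum, leaving only the claimed $k\ne i$ summands. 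As a sanity check one verifies the specialization $s=1$, where the formula must collapse to $A_1^{ii}=\varepsilon$ in accord with the block form of $G$ from Theorem \ref{mainthm}.
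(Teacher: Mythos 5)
Your proposal is correct and follows essentially the same route the paper takes (the paper offers no written proof beyond the remark that the tensors ``can be calculated by the usual transformation rules for tensors''): you push the diagonal form $A_s^{kk}=-\frac{\partial q_s}{\partial\lambda_k}\frac{B_{n-1}(\lambda_k)}{\Delta_k}$ through the Jacobian of (\ref{xj})--(\ref{aj}), and your Lagrange-interpolation and residue bookkeeping correctly supplies the combinatorial details the paper delegates to Maple. The only compressed step is the cancellation between the residue-at-infinity contributions and the $k=i$ term of the $x$-quadratic sum in $A_s^{ii}$, but the mechanism is sound and the formulas check out in low dimension.
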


We will now investigate the potentials $V_{s}^{(n-1,k)}$. Let us slightly
change the notation:%
\begin{align*}
\Gamma_{s}  & =-\operatorname*{diag}\left(  \frac{\partial\rho_{s}^{(n-1)}%
}{\partial\beta_{1}},\cdots,\frac{\partial\rho_{s}^{(n-1)}}{\partial
\beta_{n-1}}\right)  \text{ \ \ \ }s=1,\ldots,n\\
\Delta & =\operatorname*{diag}(\beta_{1},\ldots,\beta_{n-1})\\
W  & =r+\frac{1}{4}\varepsilon\left(  x,\Delta^{-1}x\right)
\end{align*}
while $\left(  \cdot,\cdot\right)  $ stands now for the standard scalar
product in $\mathbf{R}^{n-1}$. We receive, after some calculations, again with
the help of Maple%
\begin{align*}
V_{s}^{(n-1,3)}  & =-\rho_{s-1}r^{3}+\frac{1}{4}\varepsilon\left(
\Gamma_{s-1}x,x\right)  r^{2}-\varepsilon\left(  \frac{1}{2}\left(  \Gamma
_{s}x,x\right)  -\frac{1}{4}\sum_{j=1}^{n-1}\rho_{s-j-1}(x,\Delta
^{j}x)\right)  r\\
& +\frac{1}{4}\varepsilon\left(  \Gamma_{s}x,\Delta x\right)  +\frac{1}%
{16}\left(  \Gamma_{s-1}x,x\right)  (x,x)\\
V_{s}^{(n-1,2)}  & =\rho_{s-1}r^{2}-\frac{1}{4}\varepsilon\left(  \Gamma
_{s-1}x,x\right)  r+\frac{1}{4}\varepsilon\left(  \Gamma_{s}x,x\right) \\
V_{s}^{(n-1,1)}  & =-\rho_{s-1}r+\frac{1}{4}\varepsilon\left(  \Gamma
_{s-1}x,x\right) \\
V_{s}^{(n-1,0)}  & =\rho_{s-1}\\
V_{s}^{(n-1,-1)}  & =\frac{1}{W}\left(  -\rho_{s-1}+\frac{1}{4}\varepsilon
\left(  \Gamma_{s-1}x,\Delta^{-1}x\right)  \right) \\
V_{s}^{(n-1,-2)}  & =\frac{1}{W^{2}}\left(  \rho_{s-1}+\frac{1}{4}%
\varepsilon\left(  \Gamma_{s}x,\Delta^{-2}x\right)  -\frac{1}{2}%
\varepsilon\left(  \Gamma_{s-1}x,\Delta^{-1}x\right)  +\frac{1}{4}%
\varepsilon\left(  \Gamma_{s-1}x,\Delta^{-2}x\right)  r\right. \\
& \left.  +\frac{1}{16}\left(  \Gamma_{s-2}x,\Delta^{-1}x\right)  \left(
x,\Delta^{-1}x\right)  -\frac{1}{16}\left(  \Gamma_{s-2}x,x\right)  \left(
x,\Delta^{-2}x\right)  \right)
\end{align*}
(with $s=1,\ldots,n$) and again these formulas become quickly very complicated
for higher positive or negative $k$. Let us now specify these potentials for
case $s=1$. Since $\rho_{0}=1$, $\Gamma_{1}=I$ while $\Gamma_{0}=0$ we get%
\begin{align*}
V_{1}^{(n-1,4)}  & =r^{4}+\frac{3}{4}\varepsilon(x,x)r^{2}-\frac{1}%
{2}\varepsilon(x,\Delta x)r+\frac{1}{4}\varepsilon(x,\Delta^{2}x)+\frac{1}%
{16}(x,x)^{2}\\
V_{1}^{(n-1,3)}  & =-r^{3}+\frac{1}{2}\varepsilon\left(  x,x\right)
r+\frac{1}{4}\varepsilon\left(  x,\Delta x\right) \\
V_{1}^{(n-1,2)}  & =r^{2}+\frac{1}{4}\varepsilon\left(  x,x\right) \\
V_{1}^{(n-1,1)}  & =-r\\
V_{s}^{(n-1,0)}  & =1\\
V_{s}^{(n-1,-1)}  & =-\frac{1}{W}\\
V_{s}^{(n-1,-2)}  & =\frac{1+\frac{1}{4}\varepsilon\left(  x,\Delta
^{-2}x\right)  }{W^{2}}%
\end{align*}
Again, in the above formulas the sign $\varepsilon$ is present only at terms
with odd powers of $(x,x)$.

For arbitrary $0\leq m\leq n-2$ the form of the Killing tensors $A_{s}$ is not
so transparent and we will omit it here. Let us however present some results
on separable potentials $V^{(m,k)}$ in case $0\leq m<n-1$. In the case $m=0$
the variables are $(r_{1},\ldots,r_{n})$ and as we mentioned above,
$V_{r}^{(0,k)}=U_{r}^{(k)}$ where $U_{r}^{(k)}$ are polynomial (for $k\geq n$)
or rational (for $k\leq0$) functions of $q_{i}$ given by (\ref{U})-(\ref{R}).
Thus (we remember that for $m=0$ we have $q_{i}=a_{i}$, see (\ref{GF0}))%
\[
V^{(0,k)}(r)=U^{(k)}(r)=\left(
\begin{array}
[c]{cccc}%
-a_{1} & 1 &  & \\
-a_{2} &  & \ddots & \\
\vdots &  &  & 1\\
-a_{n} & 0 & \cdots & 0
\end{array}
\right)  ^{k}\left(
\begin{array}
[c]{c}%
0\\
\vdots\\
0\\
1
\end{array}
\right)  \text{, \ }k\in\mathbf{Z}%
\]
with%
\[
a_{i}=r_{i}+\frac{1}{4}\sum\limits_{j=1}^{i-1}r_{j}r_{i-j},\text{
\ \ }i=1,\ldots,n
\]
and so the first nontrivial potential is $V^{(0,n)}=(-a_{1},\ldots,-a_{n}%
)^{T}$. The situation is much more complex for the arbitrary $m$ such that
$0<m<n-1$. Before we present some results in this generic case, let us
introduce a notation similar to used above for the cases $m=n$ and $m=n-1$. We
denote%
\begin{align*}
\Gamma_{s}  & =-\operatorname*{diag}\left(  \frac{\partial\rho_{s}^{(m)}%
}{\partial\beta_{1}},\cdots,\frac{\partial\rho_{s}^{(m)}}{\partial\beta_{m}%
}\right)  \text{ \ \ \ }s=1,\ldots,m\\
\Delta & =\operatorname*{diag}(\beta_{1},\ldots,\beta_{m})
\end{align*}
and to shorten the notation we will simply denote $\rho_{s}^{(m)}=\rho
_{s}^{(m)}(\beta_{1},\ldots,\beta_{m})$ by $\rho_{s}$. The variables are now
$(x_{1},\ldots,x_{m},r_{1}\ldots,r_{n-m})$ or simply $(x,r)$. This time
$\left(  \cdot,\cdot\right)  $ will denote the scalar product in
$\mathbf{R}^{n-m}$. Introduce now the column vector of potentials
$U^{(k)}=U^{(k)}(r_{1},\ldots.r_{n-m})$, $s=1,\ldots,n-m$ given by
\[
U^{(k)}(r)=\left(
\begin{array}
[c]{cccc}%
-a_{1}(r) & 1 &  & \\
-a_{2}(r) &  & \ddots & \\
\vdots &  &  & 1\\
-a_{n-m}(r) & 0 & \cdots & 0
\end{array}
\right)  ^{k}\left(
\begin{array}
[c]{c}%
0\\
\vdots\\
0\\
1
\end{array}
\right)  \text{, \ }k\in\mathbf{Z}%
\]
with
\[
a_{i}=r_{i}+\frac{1}{4}\sum\limits_{j=1}^{i-1}r_{j}r_{i-j},\text{
\ \ }i=1,\ldots,n-m
\]
(so that the last trivial potential is $U^{(n-m-1)}=(1,\ldots,0)^{T}$). We
obtain, after some Maple calculations%
\begin{align*}
& \vdots\\
V_{s}^{(m,n-m+2)}  & =\sum_{j=1}^{n-m}\rho_{s-j}U_{j}^{(n-m+2)}+\frac{1}%
{4}\varepsilon\sum_{j=0}^{2}\left(  \Gamma_{s-(n-m-j)}x,x\right)
U_{1}^{(n-m+1-j)}\\
V_{s}^{(m,n-m+1)}  & =\sum_{j=1}^{n-m}\rho_{s-j}U_{j}^{(n-m+1)}+\frac{1}%
{4}\varepsilon\sum_{j=0}^{1}\left(  \Gamma_{s-(n-m-j)}x,x\right)
U_{1}^{(n-m-j)}\\
V_{s}^{(m,n-m)}  & =\sum_{j=1}^{n-m}\rho_{s-j}U_{j}^{(n-m)}+\frac{1}%
{4}\varepsilon\left(  \Gamma_{s-(n-m)}x,x\right)  U_{1}^{(n-m-1)}\\
V_{s}^{(m,n-m-1)}  & =\rho_{s-1}U_{1}^{(n-m-1)}=\text{const.}\\
& \vdots\\
V_{s}^{(m,-1)}  & =\frac{-\sum_{j=1}^{n-m}\rho_{s-j}a_{j-1}+\frac{1}%
{4}\varepsilon(\Gamma_{s-(n-m)}x,\Delta^{-1}x)}{a_{n-m}+\frac{1}{4}%
\varepsilon(x,\Delta^{-1}x)}\\
& \vdots
\end{align*}

Majority of these potentials seem to be new. Potentials higher than
$V_{s}^{(m,n-m+2)}$ as well as lower than $V_{s}^{(m,-1)}$contain terms at
least quadratic in $(x,x)$ and are too complicated to present it here.

\section{Flat coordinates for St\"{a}ckel systems in the complex case}

We will now investigate the case of complex conjugate roots in the polynomial
$B_{m}(\lambda)$. Assume thus that the first $2p$ ($2p\leq m$) roots
$\beta_{j}$ in $B_{m}(\lambda)$ are pairwise complex conjugate with nonzero
imaginary parts:%
\[
\overline{\beta_{2r-1}}=\beta_{2r}\text{, }r=1,\ldots,p,\text{ }%
\operatorname{Im}(\beta_{r})\neq0
\]
It is easy to check that $x_{j}^{2}$ given by (\ref{xj}) are then pairwise
complex conjugate as well: $\overline{x_{2r-1}}=\pm x_{2r}$, $r=1,\ldots,p$
although the generating function (\ref{GF})%
\[%
{\displaystyle\sum\limits_{j=0}^{n-m}}
z^{n-m-j}a_{j}-\frac{1}{4}\varepsilon\sum_{j=1}^{m}\frac{x_{j}^{2}}%
{z-\beta_{j}}\equiv\frac{%
{\displaystyle\prod\limits_{j=1}^{n}}
(z-\lambda_{j})}{%
{\displaystyle\prod\limits_{j=1}^{m}}
(z-\beta_{j})}%
\]
remains real as $B_{m}(\lambda)=%
{\displaystyle\prod\limits_{j=1}^{m}}
(z-\beta_{j})$ is real and since%
\[
\frac{x_{2r-1}^{2}}{z-\beta_{2r-1}}+\frac{x_{2r}^{2}}{z-\beta_{2r}}%
\]
is real for any $r=1,\ldots,p$. Let us now define new real variables%
\begin{equation}
\eta_{2s-1}=\frac{x_{2s-1}+x_{2s}}{\sqrt{2}},\eta_{2s}=\frac{x_{2s-1}-x_{2s}%
}{\sqrt{2}i},\text{ \ \ }s=1,\ldots,p\label{real}%
\end{equation}
The transformation inverse to (\ref{real}) is%

\[
\text{ }x_{2s-1}=\frac{1}{\sqrt{2}}\left(  \eta_{2s-1}+i\eta_{2s}\right)
\text{, }x_{2s}=\frac{1}{\sqrt{2}}\left(  \eta_{2s-1}-i\eta_{2s}\right)
,\text{ \ \ }s=1,\ldots,p
\]
We are now in position to formulate a theorem analogous to Theorem
\ref{mainthm}.

\begin{theorem}
\bigskip The metric $G$ defined by (\ref{GBen}) attains in coordinates
$(\eta,x,r)=\left(  \eta_{1},\ldots,\eta_{2p},x_{2p+1},\ldots,x_{m}%
,r_{1},\ldots r_{n-m}\right)  $ the form%
\[
G=\left(
\begin{array}
[c]{ccc}%
\varepsilon D_{2p\times2p} & 0 & 0\\
0 & \varepsilon I_{(m-2p)\times(m-2p)} & 0\\
0 & 0 & J_{(n-m)\times(n-m)}%
\end{array}
\right)
\]
where $D_{2p\times2p}$ is a $2p\times2p$ diagonal matrix with intertwined
entries $1$ and $-1$:
\[
D_{2p\times2p}=\text{diag}(1,-1,\ldots,1,-1)
\]
and where as before $I_{k\times k}$ denotes the $k\times k$ indetity matrix
and $J_{k\times k}$ denotes the $k\times k$ matrix given by $\left(
J_{k\times k}\right)  _{ij}=\delta_{i,k-j+1}$ i.e. with entries equal to zero
everywhere except on the antidiagonal where all the entries are equal to $1$.
As before, the symbol $0$ in the above formula denotes a zero matrix of
appropriate dimensions.
\end{theorem}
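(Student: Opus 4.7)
The plan is to reduce this theorem to Theorem~\ref{mainthm} by two successive changes of coordinates. First I would observe that the proof of Theorem~\ref{mainthm} rests only on the algebraic identity (\ref{GF}) together with the contravariant tensor transformation rule, neither of which requires the parameters $\beta_{j}$ to be real. Treating $x_{1},\ldots,x_{2p}$ as complex quantities defined by (\ref{xj}), the conclusion of Theorem~\ref{mainthm} therefore still holds \emph{formally}: in the (complex) coordinates $(x_{1},\ldots,x_{m},r_{1},\ldots,r_{n-m})$ the metric reads
\[
G=\left(\begin{array}{cc}\varepsilon I_{m\times m} & 0 \\ 0 & J_{(n-m)\times(n-m)}\end{array}\right).
\]

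Next I would apply the linear change (\ref{real}) only to the first $2p$ complex $x$-coordinates, keeping $x_{2p+1},\ldots,x_{m}$ and $r_{1},\ldots,r_{n-m}$ fixed. Since this change is block-diagonal with $2\times 2$ blocks acting on the conjugate pairs, and since $G$ is already block-diagonal with respect to the $x$- and $r$-parts, the overall block structure is preserved and only the top $2p\times 2p$ block has to be recomputed. For a single conjugate pair the Jacobian
\[
\frac{\partial(\eta_{2s-1},\eta_{2s})}{\partial(x_{2s-1},x_{2s})}=\frac{1}{\sqrt{2}}\left(\begin{array}{cc}1 & 1 \\ -i & i\end{array}\right)
\]
applied to $\varepsilon I_{2\times 2}$ through the contravariant transformation rule $G'=J\,G\,J^{T}$ yields $\varepsilon\operatorname*{diag}(1,-1)$. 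Taking the direct sum over $s=1,\ldots,p$ produces precisely the $\varepsilon D_{2p\times 2p}$ block, while the middle block $\varepsilon I_{(m-2p)\times(m-2p)}$ (inherited from the genuinely real roots $\beta_{2p+1},\ldots,\beta_{m}$) and the bottom block $J_{(n-m)\times(n-m)}$ survive unchanged.

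The only delicate point I expect is the reality check: the $\eta$-coordinates are real thanks to the pairing $\overline{x_{2s-1}}=\pm x_{2s}$ observed right after (\ref{xj}), but the Jacobian of (\ref{real}) is complex, so one must verify that all imaginary contributions to the final $G$ cancel. This cancellation happens pair by pair (which is exactly the $2\times 2$ computation above) and no cross terms between distinct conjugate pairs arise because $G$ is diagonal in the complex $x$-coordinates and (\ref{real}) does not mix different pairs. Once these two observations are recorded, the asserted block form follows without further calculation; in particular, no independent computation of the full Jacobian of $(\lambda_{1},\ldots,\lambda_{n})\mapsto(\eta,x,r)$ is ever needed.
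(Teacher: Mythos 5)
Your argument is correct, but it factors the proof differently from the paper. The paper's own proof never passes through complex intermediate coordinates: it writes $G=\sum_{k=0}^{m}\rho_{k}^{(m)}G_{k}$ with real coefficients, takes the known form of the $G_{k}$ in Vi\`{e}te coordinates from \cite{blaser}, and then transforms directly to $(\eta,x,r)$ via the Jacobian of the composite map built from (\ref{xrV}), which --- as the paper stresses right after the proof --- is a genuinely real map even when some $\beta_{j}$ are complex. You instead complexify: you invoke Theorem~\ref{mainthm} formally for complex $\beta_{j}$ (legitimate, since the underlying identities are rational in the $\beta_{j}$ and the tensor transformation rule is algebraic, so the identity of rational functions persists under analytic continuation) and then compose with the explicit $2\times2$ blocks of (\ref{real}), whose computation $\frac{1}{2}\bigl(\begin{smallmatrix}1&1\\-i&i\end{smallmatrix}\bigr)\bigl(\begin{smallmatrix}1&-i\\1&i\end{smallmatrix}\bigr)=\operatorname*{diag}(1,-1)$ is exactly right. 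Your factorization has the merit of isolating and exhibiting the one new ingredient beyond the real case (the hyperbolic rotation turning $\varepsilon I_{2}$ into $\varepsilon\operatorname*{diag}(1,-1)$), which the paper leaves implicit in its ``similarly as before''; the paper's route has the merit of staying within real coordinate changes throughout, so no reality/cancellation check is needed. One small caveat on your reality remark: with the pairing $\overline{x_{2s-1}}=\pm x_{2s}$ the combinations in (\ref{real}) are real only for one choice of sign (and of the branch of the square root defining $x_{j}$ from $x_{j}^{2}$); the paper glosses over this as well, but if you want the $\eta$'s to be honestly real you should fix the branches so that $\overline{x_{2s-1}}=x_{2s}$.
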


\begin{proof}
One can prove this theorem similarly as one proves Theorem \ref{mainthm}. As
before, we observe that%
\[
G=\sum_{k=0}^{m}\rho_{k}^{(m)}G_{k}%
\]
(with the real coefficients $\rho_{k}^{(m)}$). The form of tensors $G_{k}$ in
Vi\`{e}te coordinates is known \cite{blaser} so one can easily first transform
the tensor $G$ to the Vi\`{e}te coordinates, and then use the inverse of the
Jacobian of the map (\ref{xrV}) to transform it to the variables $(\eta,x,r)$.
\end{proof}

Note that the map (\ref{xrV})%
\begin{equation}
q_{i}=\sum_{j=0}^{n-m}\rho_{i-j}a_{j}+\frac{1}{4}\varepsilon\sum_{j=1}%
^{m}\frac{\partial\rho_{i-(n-m)}}{\partial\beta_{j}}x_{j}^{2}\text{,
\ }i=1,\ldots,n\label{mapa}%
\end{equation}
is actually a real map even in the complex case. The formal (algebraic, not
complex-analytic) derivatives of $\rho_{i-(n-m)}$ with respect to those of
$\beta_{j}$ that are complex, together with the corresponding complex
$x_{j}^{2}\,$, enter the second sum in (\ref{mapa}) in complex conjugate
pairs, so that this sum is indeed real.

Thus, the variables $(\eta,x,r)=\left(  \eta_{1},\ldots,\eta_{2p}%
,x_{2p+1},\ldots,x_{m},r_{1},\ldots r_{n-m}\right)  $ are flat but not
orthogonal while the original variables $(\lambda_{1},\ldots,\lambda_{n})$ are
orthogonal but not flat.

\begin{corollary}
The signature $(n_{+},n_{-})$ of the metric $G$ is in the complex case given
by%
\begin{align*}
\left(  n_{+},n_{-}\right)   & =\left(  n-\left[  \frac{n-m}{2}\right]
-p,\left[  \frac{n-m}{2}\right]  +p\right)  \text{ \ in the region where
}\varepsilon=+1\\
\left(  n_{+},n_{-}\right)   & =\left(  n-m-\left[  \frac{n-m}{2}\right]
+p,\left[  \frac{n-m}{2}\right]  +m-p\right)  \text{ \ in the region where
}\varepsilon=-1
\end{align*}
where $\left[  \alpha\right]  $ denotes the integer part of the number
$\alpha$.
\end{corollary}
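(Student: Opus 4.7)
The proof should reduce to linear algebra on the explicit block matrix given in the preceding theorem. Since $G$ is block diagonal with three blocks, its signature is the sum of the signatures of those blocks, so the plan is simply to compute each block's signature separately and add.

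First, for the block $\varepsilon D_{2p\times 2p}$: by definition $D_{2p\times 2p}=\operatorname{diag}(1,-1,1,-1,\ldots,1,-1)$ has $p$ entries equal to $+1$ and $p$ entries equal to $-1$, so multiplying by $\varepsilon=\pm 1$ just permutes these eigenvalues. Hence this block contributes $(p,p)$ to the signature regardless of the sign of $\varepsilon$. Second, the block $\varepsilon I_{(m-2p)\times(m-2p)}$ is a scalar multiple of the identity, so it contributes $(m-2p,0)$ when $\varepsilon=+1$ and $(0,m-2p)$ when $\varepsilon=-1$.

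The only mildly nontrivial computation concerns the antidiagonal block $J_{k\times k}$ with $k=n-m$. Since $\left(J_{k\times k}\right)_{ij}=\delta_{i,k-j+1}$, one checks directly that $J_{k\times k}^{2}=I_{k\times k}$, so the eigenvalues of $J_{k\times k}$ are $\pm 1$. The $+1$-eigenspace consists of vectors $v$ with $v_{i}=v_{k-i+1}$ (symmetric about the middle index), which has dimension $\lceil k/2\rceil$, and the $-1$-eigenspace consists of antisymmetric vectors $v_{i}=-v_{k-i+1}$, which has dimension $\lfloor k/2\rfloor$. Thus this block contributes
\[
\left(n-m-\left[\tfrac{n-m}{2}\right],\ \left[\tfrac{n-m}{2}\right]\right)
\]
to the signature.

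Adding the three contributions, for $\varepsilon=+1$ one obtains
\[
\left(p+(m-2p)+n-m-\left[\tfrac{n-m}{2}\right],\ p+0+\left[\tfrac{n-m}{2}\right]\right)=\left(n-\left[\tfrac{n-m}{2}\right]-p,\ \left[\tfrac{n-m}{2}\right]+p\right),
\]
and for $\varepsilon=-1$ one obtains
\[
\left(p+0+n-m-\left[\tfrac{n-m}{2}\right],\ p+(m-2p)+\left[\tfrac{n-m}{2}\right]\right)=\left(n-m-\left[\tfrac{n-m}{2}\right]+p,\ \left[\tfrac{n-m}{2}\right]+m-p\right),
\]
which matches the two formulas in the statement. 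The only step requiring any thought is the analysis of $J_{k\times k}$; everything else is immediate from block diagonality. There is no genuine obstacle, so I expect the full proof to be essentially a few lines.
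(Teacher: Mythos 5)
Your proof is correct and is exactly the argument the paper intends: the corollary is stated without proof as an immediate consequence of the block form of $G$ in the preceding theorem, and your block-by-block signature count (including the eigenvalue analysis of the antidiagonal matrix $J_{(n-m)\times(n-m)}$, whose $+1$- and $-1$-eigenspaces have dimensions $n-m-\left[\frac{n-m}{2}\right]$ and $\left[\frac{n-m}{2}\right]$) is the computation being implicitly invoked.
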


It is not possible to write down Killing tensors in the complex case in any
reasonably compact form even for the elliptic case, the subcases become too
many, albeit it is not difficult to calculate these tensors for any given
choice of parameters $n,m,p$. There is on the other hand no change to the
potentials $V_{s}^{(m,k)}$ in the complex case as the function $\sigma
(\lambda)=-\lambda^{k}B_{m}(\lambda)$ in (\ref{BenSC}) is real even in the
complex case. Of course the explicit form of the potentials do change. For
example, in the complex elliptic case, in the variables $(\eta,x,r)=\left(
\eta_{1},\ldots,\eta_{2p},x_{2p+1},\ldots,x_{n}\right)  $, the potentials
$V_{s}^{(n,k)}$, given in the previous section, do change its explicit form.
The scalar product $(x,x)$ (which is real due to the fact that now
$\overline{x_{2r-1}}=\pm x_{2r}$, $r=1,\ldots,p$) is given explicitly as%
\[
(x,x)=\sum\limits_{s=1}^{p}\left(  \eta_{2s-1}^{2}-\eta_{2s}^{2}\right)
+\sum\limits_{s=2p+1}^{n}\eta_{s}^{2}%
\]
and so on. The fact that $V_{s}^{(n,k)}$ are real also in the complex case can
be easily seen, as all the expressions $(x,x)$, $(x,\Delta^{p}x)$ for
$p\in\mathbf{Z}$, $(\Gamma_{s}x,x)$ and $(\Gamma_{s}x,\Delta^{p}x)$ are real.
For example, $(\Gamma_{s}x,x)$ contains apart from the real terms also pairs
of complex terms of the form%
\[
-\frac{\partial\rho_{s}^{(n)}}{\partial\beta_{2i-1}}x_{2i-1}^{2}%
-\frac{\partial\rho_{s}^{(n)}}{\partial\beta_{2i}}x_{2i}^{2}%
\]
(where again we have formal i.e. non-complex algebraic derivatives). We easily
see that each such pair and so the whole expression $(\Gamma_{s}x,x)$ is real.

\section{Conclusions}

In this article we presented flat coordinates for majority of flat St\"{a}ckel
systems. These coordinates coincide in the elliptic (i.e. when $m=n$) and in
the parabolic ($m=n-1$) cases with the well known generalized Jacobi elliptic
(respectively parabolic) coordinates. The only St\"{a}ckel systems that have
not been covered by our construction are those that are generated by the
separation curve (\ref{BenSC}) in the case when some of the roots of the
polynomial $B_{m}(\lambda)$ has algebraic multiplicity larger than $1$. On the
other hand, in \cite{blaser} the authors considered the completely degenerated
case i.e. when all $\beta_{i}$ coincide (and are equal to zero). For the case
$m=0$ our formulas coincide with the formulas obtained there.

In this paper we also presented - in the elliptic $m=n$ and parabolic $m=n-1$
case - a compact form of Killing tensors of St\"{a}ckel systems in these flat
coordinates and also a compact form of their separable potentials
$V_{s}^{(m,k)}$. They contain the well known Garnier system and they encompass
for example both families of separable potentials found in \cite{StefanMon}.
We also presented the form of the first few potentials $V_{s}^{(m,k)}$ in the
case $m=0$ i.e. the formulas for $V^{(0,k)}(r)=U^{(k)}(r)$ in flat coordinates.

\section{Acknowledgement}

Both authors were partially supported by The Royal Swedish Academy of Sciences
grant no FOA13Magn-088.

\end{document}